\newtheorem{theorem}{Theorem}[section]
\newtheorem{lemma}{Lemma}[section]
\begin{document}

\title{Parity Oblivious $d$-Level Random Access Codes and Class of Noncontextuality Inequalities}

\author{Andris Ambainis}
\affiliation{Faculty of Computing, University of Latvia, Raina bulv. 19, Riga, LV-1586, Latvia.}

\author{Manik Banik}
\affiliation{Optics \& Quantum Information Group, The Institute of Mathematical Sciences, C.I.T Campus, Tharamani, Chennai 600 113, India.}

\author{Anubhav Chaturvedi}
\affiliation{National Quantum Information Centre, Gdansk, Poland.}

\author{Dmitry Kravchenko}
\affiliation{Faculty of Computing, University of Latvia, Raina bulv. 19, Riga, LV-1586, Latvia.}

\author{Ashutosh Rai}
\affiliation{Faculty of Computing, University of Latvia, Raina bulv. 19, Riga, LV-1586, Latvia.}


\begin{abstract}
One of the fundamental results in quantum foundations is the Kochen-Specker \emph{no-go} theorem. For the quantum theory, the \emph{no-go} theorem excludes the possibility of a class of hidden variable models where value attribution is context independent. Recently, the notion of contextuality has been generalized for different operational procedures and it has been shown that preparation contextuality of mixed quantum states can be a useful resource in an information-processing task called parity-oblivious multiplexing. Here, we introduce a new class of information processing tasks, namely $d$-level parity oblivious random access codes and obtain bounds on  the success probabilities of performing such tasks  in any preparation noncontextual theory. These bounds constitute noncontextuality inequalities for any value of $d$. For $d=3$, using a set of \emph{mutually asymmetric biased bases} we show that the corresponding noncontextual bound is violated by quantum theory. We also show quantum violation of the inequalities for some other higher values of $d$. This reveals operational usefulness of  preparation contextuality of higher level quantum systems. 
\end{abstract}


\maketitle

	
\emph{Introduction}:--Kochen-Specker (KS) theorem, along with Bell's theorem, is considered as one of the most important results in the foundations of quantum mechanics (QM). Both the theorems demonstrate the impossibility of certain types of interpretation of QM in terms of hidden variables (HV). Whereas in derivation of Bell's theorem the fundamental premises are \emph{reality} and \emph{locality} \cite{Bell'64}, in the case of KS theorem the \emph{locality} premise is replaced by the premise of \emph{noncontextuality} \cite{KS'67, Mermin'93, Brunner'14}.

The notion of contextuality is generalized recently, by Spekkens, to arbitrary operational theories, and for different experimental procedures, viz., preparation procedure, measurement procedure, and transformation procedure \cite{Spekkens'05, Harrigan'07}. While the conventional definition of contextuality address to measurement contexts, and has been studied in much depth \cite{Peres, Grudka}, by considering the generalized framework for contextuality, much recently, a number of new interesting results have been discovered \cite{Banik'14,Banik'15, Harrigan'07, Leifer'13, Spekkens'09, Ravi, Mazurek, Pusey,Chailloux'16}. In particular, the study of preparation contextuality has lead to interesting connections with certain information processing task \cite{Spekkens'09}, and the concept of Bell-nonlocality \cite{Banik'15,Chailloux'16}. In the present work, we design a new family of information processing tasks and derive noncontextuality inequalities which enables an operational depiction of preparation contextuality for mixed states in higher dimensions. 
 
\emph{Preparation contextuality}:--is defined as the impossibility of representing two equivalent preparation procedures, in an operational theory, equivalently in any ontological model. More precisely, suppose two operational preparations are equivalent in the sense that the outcome probability distributions for both these preparations are identical for all measurements, i.e., the two preparations are empirically indistinguishable. Then, a hidden variable model (ontic model) which reproduces the operational statistics is preparation noncontextual, if any two equivalent preparations provide equivalent probabilistic descriptions of the system at the ontological level \cite{Spekkens'05}. It is has been shown that, mixed quantum states (preparations) are preparation contextual \cite{Spekkens'05,Banik'14}. It is also known that preparation noncontextual models underlying quantum theory must be maximally epistemic, and these in turn must be Kochen-Specker noncontextual \cite{Leifer'13}. Moreover, the concept of preparation contextuality has been used to demonstrate nonlocality of some $\psi$-epistemic models without any use of Bell’s inequality \cite{Leifer'13,Banik'14}. Along with its foundational implications, preparation contextuality has an important ramification in certain information theoretic task. It has been shown that a particular two-party information-processing task called parity-oblivious multiplexing is powered by preparation contextuality \cite{Spekkens'09}. The authors in \cite{Spekkens'09}, derived a (preparation) noncontextuality inequality which is violated by quantum theory; it has been shown that, completely mixed state of two-level quantum system violates such a bound which provides an operational demonstration of preparation contextuality. 

In this paper we ask the question whether preparation contextuality of higher level mixed quantum states can be revealed in some operational way. Interestingly, we get an affirmative answer for this question. We define a class of information tasks, namely, \emph{parity oblivious $d$-level random access codes}, and find the optimal success probabilities (bounds) for these tasks in any preparation noncontextual theory. For any $d$, such a bound constitutes a noncontextuality inequality, and a violation of the corresponding inequality by an operational theory will imply that the theory is preparation contextual. Then, for a three level quantum system, we find a mutually asymmetric biased bases (MABB) to construct a quantum protocol which exhibits the preparation contextuality of completely mixed state of three level quantum system. We give two more protocols showing quantum violation of respective inequalities for $d=4$ and $d=5$. Our findings thereby exhibit preparation contextuality of completely mixed quantum states in higher dimensions, through operational tasks. Since the family of information processing tasks that we consider here gives preparation noncontextuality inequality in any finite dimension, it opens the possibility for operational depiction of preparation contextuality of mixed sates in any higher dimension.
In the following, we first introduce the class of information processing tasks considered in our work.  

\emph{Parity oblivious $d$-level random access codes} ($d$-PORAC)---Consider the following two-party communication task. Alice receives uniformly at random some length-$2$ string $x=x_1x_2$, where $x_n$, for $n\in \{1,2\}$, takes values from a $d$-level alphabet set $\{0,1,...,d-1\}$. Bob receives, uniformly at random, an index $y\in\{1,2\}$. Bob's task is to recover the $y^{th}$ dit (i.e., $x_y$) in Alice's string. Alice can send some information about her string to help Bob, however, there is a restriction on Alice's communication to Bob which can be stated as follows; \emph{Restriction} ({\bf R}): no information about the \emph{parity} $x_1\oplus_dx_2$ of Alice's string can be transferred to Bob, where $\oplus_d$ denotes addition modulo $d$. Let us denote Bob's guess about $x_y$ by $b$, then the average success probability in this game can be expressed as $p(b=x_y)$. 

The restriction on information transfer, i.e. {\bf R}, induces a partition over the set of all strings $\{x_1x_2 :x_1,x_2\in\{0,1,...,d-1 \}\}$, into $d$ equal parts which is defined as $\mathbb{P}_l:=\{x_1x_2~|~x_1\oplus_dx_2=l\}$, where $l\in\{0,...,d-1\}$. Then, {\bf R} implies that, no information about to which partition $\mathbb{P}_l$ Alice's string $x_1x_2$ belongs can be transferred. In what follows, we derive the optimal classical average success probability for this game. First we  prove a lemma which is crucial to obtain the classical bound for $d$-PORAC task. 
\begin{lemma}\label{lemma1}
More than $1$-dit information from Alice to Bob always carries some information about the parity $x_1\oplus_dx_2$.
\end{lemma}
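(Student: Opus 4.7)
The plan is to formalise ``information'' as Shannon mutual information and to exploit the fact that the parity $P := X_1 \oplus_d X_2$ is a deterministic function of Alice's input $X = (X_1, X_2)$. Since $X$ is uniform on $\{0,\dots,d-1\}^2$, we have $H(X) = 2\log d$, and $P$ is uniform on $\{0,\dots,d-1\}$ with $H(P) = \log d$; thus ``$1$-dit'' corresponds to $\log d$ in the natural units. Let $M$ denote the (possibly randomised) message Alice sends to Bob. The restriction $\mathbf{R}$ becomes the precise statement $I(P;M) = 0$, i.e.\ $P$ and $M$ are statistically independent.

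The core of the argument is a one-line conditional chain rule. Because $P$ is a function of $X$, we have $H(X,P\mid M) = H(X\mid M)$, and so
\[
H(X \mid M) \;=\; H(P \mid M) \,+\, H(X \mid P, M).
\]
Parity-obliviousness gives $H(P \mid M) = H(P) = \log d$, hence $H(X \mid M) \geq \log d$, and therefore
\[
I(X;M) \;=\; H(X) - H(X \mid M) \;\leq\; 2\log d - \log d \;=\; \log d,
\]
which is exactly $1$ dit. The contrapositive is the lemma: if Alice's message carries strictly more than $1$ dit about $X$, then $H(P \mid M) < H(P)$, so $M$ must leak some information about the parity.

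The main obstacle, in my view, is not the calculation but pinning down the intended operational meaning of ``$1$-dit'' so that the bound is sharp. The mutual-information formulation above is the cleanest route. If the authors instead prefer an alphabet-size reading, a short combinatorial argument suffices: for any deterministic parity-oblivious encoding $E : \{0,\dots,d-1\}^2 \to M$, the fibre $E^{-1}(m)$ must intersect each class $\mathbb{P}_l$ in the same number of strings, so $|E^{-1}(m)|$ is a positive multiple of $d$; summing over $m$ forces $|M| \leq d$, and probabilistic encodings reduce to this case by convex decomposition into deterministic parity-oblivious strategies. Either route delivers the same quantitative cap of $1$ dit, which is precisely the content of the lemma.
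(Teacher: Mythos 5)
Your fallback combinatorial argument is, in contrapositive form, exactly the paper's proof: the paper assumes a deterministic encoding onto more than $d$ symbols, notes by pigeonhole that some fibre $\mathbb{E}_{j^*}$ then has fewer than $d$ strings and so must miss some parity class $\mathbb{P}_{l^*}$, whence receiving $j^*$ updates Bob's posterior on the parity away from uniform; you run the implication the other way (parity-obliviousness forces every used fibre to meet each $\mathbb{P}_l$ equally, hence to have size at least $d$, hence at most $d$ messages). Both rest on the same key observation that, under a uniform prior, ``no parity information'' is equivalent to $\mathcal{C}(\mathbb{E}_j\cap\mathbb{P}_l)$ being independent of $l$. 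Your primary, information-theoretic route is genuinely different and is internally correct, but be aware that it formalises ``$1$ dit'' differently from the paper, and the two readings are not equivalent: $I(X;M)\le\log d$ is a strictly weaker conclusion than ``at most $d$ message symbols are used.'' A deterministic encoding onto $d+1$ symbols with one huge fibre and $d$ singleton fibres has $I(X;M)=H(M)$ well below $\log d$ for large $d$, yet it necessarily leaks parity; so the entropic lemma does not subsume the paper's. Moreover, the downstream use of the lemma (Theorem~\ref{thm1} reduces $d$-PORAC to the $d$-RAC of \cite{Ambainis'15}, whose classical bound is proved for a channel with $d$ symbols, not for arbitrary channels of bounded mutual information) needs the alphabet-size version, so your combinatorial route is the operative one and the entropic route should be regarded as a clean supplementary bound rather than a replacement. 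One further caveat, which you share with the paper: the reduction of randomised encodings to deterministic ones is delicate here, because the deterministic components of a parity-oblivious mixture need not individually be parity-oblivious (e.g.\ Alice privately one-time-padding the parity); the entropic argument is actually the cleaner way to handle private randomness, since it works directly with the conditional distribution $p(M|X)$.
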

\begin{proof}
A classical encoding-decoding strategy could be either randomized or deterministic. Indeed, success
probability of any randomized strategy is nothing but
a convex combination of success probabilities of several
deterministic strategies. Therefore, optimal classical bound over all possible deterministic strategies is greater than or equal to the success probability in any randomized strategy; so to obtain the classical bounds for these tasks, it is sufficient to analyze only deterministic strategies.

In a deterministic strategy, for sending more than $1$-dit information, it is necessary for Alice to encode her strings into more than $d$ number of symbols. Then, let Alice has some encoding \emph{onto} map,
\begin{equation}\label{eq1}
\mathcal{E}: \{0,..,d-1\}^2~~\longrightarrow~~\{0,...,k\},
\end{equation}       
where $d\le k\le d^2-1$. Any such encoding map partitions the set of all strings (total $d^2$ in number) into $k+1$ parts $\mathbb{E}_j$, with $0\le j\le k$. On receiving the symbol $j$ from Alice, Bob gets the information that Alice's string belongs to the partition $\mathbb{E}_j$, then, Bob will not get any information about the parity of Alice's string if and only if $\mathcal{C}(\mathbb{E}_j\cap\mathbb{P}_l)=\mathcal{C}(\mathbb{E}_j\cap\mathbb{P}_{l'})$ for all $l,l'\in\{0,...,d-1\}$, where $\mathcal{C}(\cdot)$ denotes cardinality of a set. Now, whenever $k\ge d$, there exists at least one partition, say $\mathbb{E}_{j^*}$, in which the number of strings is strictly less than $d$. This further implies that, there exist at least one partition $\mathbb{P}_{l^*}$ such that $\mathcal{C}(\mathbb{E}_{j^*}\cap\mathbb{P}_{l^*})=0$. Therefore, obtaining the symbol $j^*$ from Alice, Bob will conclude that parity of the Alice's string is not $l^*$, and as result, Bob can guess some other parity (except $l^*$) with a probability greater than $\frac{1}{d}$. This proves our claim.   
\end{proof}
According to the lemma$\ref{lemma1}$, no more than $1$-dit information is allowed from Alice to Bob, which seems to put similar restriction as in the $d$-level RAC task recently studied in \cite{Tavakoli'15}. However, in our task the restriction is more stringent; whereas in d-RAC the only restriction is that no more than $1$-dit information transfer is allowed, but in our case, $1$-dit (or even less) amount of communications that can carry information of parity is not allowed. Due to the similarity to the d-RAC task, we call our task as d-PORAC. 

It is important to note here that, the class of tasks we consider here is crucially different from the parity-oblivious multiplexing (POM) task considered in \cite{Spekkens'09}. POM is a task between two parties where some $n$-bit strings is given to the one party and the task of the other party is to guess an arbitrarily chosen single bit of the string. Additionally, a restriction is imposed on allowed communication which in turn determines the possible classical protocols. The only allowed classical protocols for POM are those that encode only a single bit (chosen arbitrarily) from the $n$-bit string. In contrast to this, for our $d$-level PORAC task the considered restriction allows $1$-dit communication in more general ways. However, even with this relaxed feature we find  
the optimal classical success probabilities.
\begin{theorem}\label{thm1}
The optimal classical success probability of d-PORAC is $1/2(1+1/d)$.
\end{theorem}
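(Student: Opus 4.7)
The plan is to establish the bound via a per-symbol information-theoretic inequality. First observe that for any parity-oblivious (p.o.) encoding, deterministic or randomized, letting $r_\sigma(v,w)$ denote the conditional distribution of $(x_1,x_2)$ given Alice's sent symbol $\sigma$, the p.o.\ constraint translates into the uniform-parity condition $\sum_{(v,w) \in \mathbb{P}_l} r_\sigma(v,w) = 1/d$ for every $l$ and every $\sigma$ in the support. This is because Alice's symbol is, by assumption, statistically independent of the parity $x_1 \oplus_d x_2$, and the parity is uniform on $\{0,\ldots,d-1\}$ since $(x_1,x_2)$ is drawn uniformly.

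On receiving $\sigma$ and the index $y$, Bob's optimal decoder outputs the mode of the relevant marginal of $r_\sigma$, succeeding with probability $\max_v p_\sigma(v)$ when $y = 1$ and $\max_w q_\sigma(w)$ when $y = 2$, where $p_\sigma$ and $q_\sigma$ are the one-dit marginals of $r_\sigma$. Averaging over the uniform $y$ and over $\sigma$ yields
\[
p(b = x_y) = \tfrac{1}{2} \sum_\sigma \Pr(\sigma) \bigl[\max_v p_\sigma(v) + \max_w q_\sigma(w)\bigr],
\]
so the theorem reduces to the per-symbol inequality $\max_v p_\sigma(v) + \max_w q_\sigma(w) \le 1 + 1/d$ under the uniform-parity constraint on $r_\sigma$.

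To prove this inequality, I would fix optimizers $v^{*}, w^{*}$ and rewrite
\[
\max_v p_\sigma(v) + \max_w q_\sigma(w) = \sum_w r_\sigma(v^{*},w) + \sum_v r_\sigma(v,w^{*}),
\]
then regroup the right-hand side by parity class. The row $\{(v^{*},w) : w\}$ and the column $\{(v,w^{*}) : v\}$ together meet each class $\mathbb{P}_l$ in the two cells $(v^{*}, l - v^{*})$ and $(l - w^{*}, w^{*})$, which coincide precisely when $l = v^{*} + w^{*}$. Bounding each of the $d$ classes by $1/d$ via uniform-parity, and noting that $(v^{*}, w^{*})$ is the single doubly counted cell, gives $(d - 1)/d + 2/d = (d+1)/d$, exactly the required bound.

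Matching achievability is provided by the deterministic strategy in which Alice sends $\sigma = x_1$: it is p.o.\ because $x_1 \oplus_d x_2$ is uniform conditional on $x_1$, and Bob outputs $\sigma$ when $y = 1$ (always correct) and a uniformly random dit when $y = 2$ (correct with probability $1/d$), for an average of $\tfrac{1}{2}(1 + 1/d)$. The main obstacle is the per-symbol inequality itself: the trivial marginal bound $\max_v p_\sigma(v) + \max_w q_\sigma(w) \le 2$ is far too loose, and a direct combinatorial analysis over all partition shapes $|\mathbb{E}_j \cap \mathbb{P}_l|$ permitted by Lemma~\ref{lemma1} is cumbersome. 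The crucial idea is to regroup the row-plus-column sum by parity classes so the uniform-parity constraint can be invoked once per class, with the doubly counted $(v^{*},w^{*})$ cell supplying the precise $1/d$ slack that makes the bound tight.
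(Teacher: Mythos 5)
Your proof is correct, but it takes a genuinely different route from the paper's. The paper obtains the upper bound by reduction: Lemma~\ref{lemma1} is used to argue that a parity-oblivious deterministic encoding can use at most $d$ symbols, so $d$-PORAC is a restricted instance of the $2\mapsto 1$ $d$-level RAC, whose optimal classical value $\tfrac{1}{2}(1+\tfrac{1}{d})$ is imported from \cite{Ambainis'15}; achievability is then shown exactly as you do, by having Alice send $x_1$. You instead exploit the parity restriction quantitatively and per received symbol: the uniform-parity condition $\sum_{(v,w)\in\mathbb{P}_l} r_\sigma(v,w)=1/d$ (which does follow from {\bf R}, since the $d$ equal posteriors must each be $1/d$), combined with the regrouping of the row-plus-column sum by parity class --- each $\mathbb{P}_l$ meets the row $v^{*}$ and column $w^{*}$ in the two cells $(v^{*},l\ominus v^{*})$ and $(l\ominus w^{*},w^{*})$, which coincide exactly when $l=v^{*}\oplus_d w^{*}$ --- gives $\max_v p_\sigma(v)+\max_w q_\sigma(w)\le (d-1)/d+2/d$ directly. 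What this buys is self-containedness: you need neither Lemma~\ref{lemma1} nor the external $d$-RAC bound, your argument places no a priori cap on the size of Alice's alphabet, and it handles randomized strategies without the usual convexity reduction to deterministic ones (provided everything Bob receives, including any shared randomness, is folded into $\sigma$). What the paper's route buys is brevity and the conceptual message that the parity restriction is at least as constraining as a $1$-dit communication cap; that the two constraints yield the same numerical optimum is then exhibited by the common achievability protocol.
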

\begin{proof}
From lemma\ref{lemma1} and the discussion soon after, it is clear that d-PORAC is a restricted version of d-RAC. So the optimal classical success probability of d-PORAC can not be more than that of d-RAC. Recently, it has been shown that for d-RAC of string length $2$, the optimal classical success probability is $1/2(1+1/d)$ \cite{Ambainis'15}. The remaining argument is to show that even in d-PORAC this optimal value of d-RAC is achievable. If Alice always encodes her first (second) dit and send it to Bob, then Bob can perfectly guess about the first (second) dit and he guesses the other dit randomly; this protocol gives the required optimal average success probability same as in d-RAC. Note that this protocol does not carry any information about the parity.       
\end{proof}     

Alice and Bob can try to play this game using resources from a generalized operational theory \cite{Barrett'07,Janotta'11}. However, in the following we prove a \emph{no-go} result which states that for certain class of such theories, the success probability for the d-PORAC game is no more than the optimal classical success.

\emph{Generalized operational theory}: A generalized operational theory, as discussed in \cite{Spekkens'05,Harrigan'07}, merely specifies the probabilities $p(k|M, P)$ of different outcomes $k\in \mathcal{K}_M$ that may result from a measurement procedure $M\in \mathcal{M}$ performed on a system following some preparation procedure $P\in \mathcal{P}$; where $\mathcal{M}$ and  $\mathcal{P}$ denote the sets of measurement procedures and preparation procedures respectively, and $\mathcal{K}_M$ denotes the set of measurement results for the measurement $M$. As an example, in an operational formulation of quantum theory (QT), preparation $P$ is associated with a density operator $\rho$ on some Hilbert space, and measurement $M$ is associated with a positive operator valued
measure (POVM) $\{E_k~|~ E_k \ge 0 ~\forall~ k~\mbox{and}~\sum_kE_k = \mathbf{I}\}$. The probability of obtaining outcome $k$ is given by the Born rule, $p(k|M, P) = \mbox{Tr}(\rho E_k)$.

For playing the d-PORAC game in a generalized operational theory, Alice encodes her strings $x$ in some state (preparation) $P_x$ and sends the encoded state to Bob. For decoding $y^{th}$ dit, Bob performs some $d$ outcome measurement $M_y$ and guess the dit according to the measurement results. The average success probability can be expressed as:
\begin{equation}\label{eq2}
p(b=x_y)=\frac{1}{2\times d^2}\sum_{y\in\{1,2\}}\sum_{n\in\{0,....d-1\}^2}p(b=x_y|P_x,M_y).
\end{equation}
To satisfy the parity oblivious condition, Alice's encoding must satisfy the following relations:
\begin{eqnarray}\label{eq3}
\sum_{x\in\mathbb{P}_l}p(P_x|k,M)=\sum_{x\in\mathbb{P}_{l'}}p(P_x|k,M),~~\forall~k,M,\\\mbox{and}~~\forall~l,l'\in\{0,...,d-1\}.\nonumber
\end{eqnarray} 
Interestingly, due to this restriction the success probability of d-PORAC in any preparation noncontextual theory is restricted by the optimal classical value. 

First we give a short discussion of the general framework for ontological model of an operational theory and briefly explain the notion of preparation contextuality in these theories.

\emph{Ontological model}: In an ontological model of an operational theory, the primitives of description are the \emph{real} properties of a system, called ontic state $\lambda\in\Lambda$, where $\Lambda$ being the ontic state space. A preparation procedure $P$ yields a probability distribution
$p(\lambda|P)$ over the ontic states. Measurement $M$ performed on a
system described by ontic state $\lambda$ yields outcome $k$ with probability $p(k|\lambda,M)$. The ontological model to be compatible with the operational theory must satisfy the probability reproducibility condition, i.e., $p(k|P,M)=\int_{\lambda\in\Lambda}d\lambda p(\lambda|P)p(k|\lambda,M)$. An ontological model is preparation noncontextual, if two operational preparations yielding the same statistics for all possible measurements, also yield same distribution over the ontic states, i.e., 
\begin{eqnarray}\label{eq4}
\forall M:p(k|P,M)=p(k|P',M)\Rightarrow p(\lambda|P)=p(\lambda|P).
\end{eqnarray}
The optimal success probabilities of the $d$-PORAC games in any preparation noncontextual theory is established by the follwing theorem.
\begin{theorem}\label{thm2}
In any preparation noncontextual theory the success probability of d-PORAC can not be more than the optimal classical success probability, i.e., $1/2(1+1/d)$.
\end{theorem}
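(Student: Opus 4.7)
The plan is to reduce any preparation noncontextual realization of $d$-PORAC to an equivalent classical parity oblivious protocol and then to invoke Theorem~\ref{thm1}.

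First, I would rewrite the parity-oblivious condition (\ref{eq3}) via Bayes' rule using the uniform prior $p(P_x)=1/d^2$; it becomes $\sum_{x\in\mathbb{P}_l} p(k|P_x,M) = \sum_{x\in\mathbb{P}_{l'}} p(k|P_x,M)$ for all $k$, $M$, $l$, $l'$. Thus the parity-averaged preparations $\bar{P}_l := \tfrac{1}{d}\sum_{x\in\mathbb{P}_l}P_x$ yield identical statistics under every measurement and are therefore operationally indistinguishable. Invoking the preparation noncontextuality hypothesis (\ref{eq4}) on each pair $(\bar{P}_l,\bar{P}_{l'})$ then forces their ontic distributions to agree with a common $\mu$:
\begin{equation*}
\frac{1}{d}\sum_{x\in\mathbb{P}_l} p(\lambda|P_x)=\mu(\lambda)\qquad\text{for every }l.
\end{equation*}

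Next, I would view the ontological model as itself a classical encode/decode protocol: on input $x$, Alice samples a classical message $\lambda\sim p(\lambda|P_x)$ and forwards it to Bob, who on inputs $(\lambda,y)$ outputs $b$ with probability $p(b|\lambda,M_y)$. By probability reproducibility, the average success of this classical protocol equals the operational success $p(b=x_y)$ in (\ref{eq2}). The displayed identity above says that the law of $\lambda$ conditioned on any parity class $\mathbb{P}_l$ is $\mu$, independent of $l$, so the classical message $\lambda$ carries no information about $x_1\oplus_d x_2$; the simulated protocol is thus parity oblivious in the sense of restriction \textbf{R}. Theorem~\ref{thm1} therefore bounds its average success by $\tfrac{1}{2}(1+1/d)$, which is the claim.

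The main obstacle in making this rigorous is ensuring that Theorem~\ref{thm1}'s bound carries over to the simulated classical protocol, whose message alphabet $\Lambda$ can be infinite or continuous. This is resolved by the convexity remark already used in the proof of Lemma~\ref{lemma1}: any classical randomized strategy is a convex combination of deterministic ones, and any deterministic encoding of the $d^2$ inputs uses at most $d^2$ distinct symbols, reducing the bound to the finite alphabet case covered directly by Theorem~\ref{thm1}.
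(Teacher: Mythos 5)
Your argument follows the paper's proof of Theorem~\ref{thm2} essentially step for step: Bayes' rule turns the operational parity-oblivious condition (\ref{eq3}) into the equality of the parity-averaged preparations' statistics, preparation noncontextuality (\ref{eq4}) then equates their ontic distributions, and the ontic variable $\lambda$ is read as a classical message carrying no parity information, so that Lemma~\ref{lemma1} and Theorem~\ref{thm1} close the argument. You are in fact more explicit than the paper about the final reduction (the paper simply asserts that the result ``follows from'' Lemma~\ref{lemma1} and Theorem~\ref{thm1}).

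One caution about your last paragraph: the convexity patch does not work as stated. Decomposing the randomized encoding $p(\lambda|P_x)$ into deterministic encodings destroys parity obliviousness componentwise --- the constraint $\sum_{x\in\mathbb{P}_l}p(P_x|\lambda)=\sum_{x\in\mathbb{P}_{l'}}p(P_x|\lambda)$ is a property of the mixture, and an individual deterministic component can be injective (success~$1$) while leaking full parity information. So you cannot apply Lemma~\ref{lemma1}, which bounds deterministic parity-oblivious encodings, to each component separately. Closing this for a randomized parity-oblivious message requires a direct argument on the posterior $q(x)=p(P_x|\lambda)$: writing $b_1^*,b_2^*$ for Bob's optimal guesses, one has $\max_b q_1(b)+\max_b q_2(b)\le 1+q(b_1^*b_2^*)\le 1+\tfrac{1}{d}$, the last step using that $q$ assigns total weight $\tfrac{1}{d}$ to the parity class containing $b_1^*b_2^*$. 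This gap is inherited from the paper's own terse conclusion rather than introduced by you, and the theorem itself is not in doubt, but your proposed resolution is the one step that would not survive scrutiny.
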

\begin{proof}
The steps in the proof of this theorem resembles to the proof of a similar theorem in \cite{Spekkens'09}, here we give a suitably modified proof for our theorem. 

In an operational theory the collection of all preparations $\mathcal{P}$ is a convex set, this enables any probabilistic mixture of preparation procedures corresponding to different states of the theory to be again a valid preparation. Consider a mixed preparation $P_l$ produced by choosing uniformly at random some preparation $P_x$ corresponding to the string $x$ belonging to the partition $\mathbb{P}_l$, i.e, $P_l=\frac{1}{d}\sum_{x\in\mathbb{P}_l}P_x$. Given the preparation $P_l$, the probability of obtaining outcome $k$ for the measurement M is,
\begin{equation}\label{eq5}
p(k|P_l,M)=\frac{1}{d}\sum_{x\in\mathbb{P}_l}p(k|P_x,M).
\end{equation}
Also the preparation $P_l$ yields the distribution on the ontic state $\lambda$,
\begin{equation}\label{eq6}
p(\lambda|P_l)=\frac{1}{d}\sum_{x\in\mathbb{P}_l}p(\lambda|P_x).
\end{equation} 
In any operational theory, the parity obliviousness puts the restriction describe in Eq.(\ref{eq3}). Using Bayes theorem and the fact that Alice's strings come from an uniform distribution, we can write,
\begin{eqnarray}\label{eq7}
\sum_{x\in\mathbb{P}_l}p(k|P_x,M)=\sum_{x\in\mathbb{P}_{l'}}p(k|P_x,M),~~\forall~k,M,\\
\mbox{and}~~\forall~l,l'\in\{0,...,d-1\}.\nonumber
\end{eqnarray}
The above expression along with Eq.(\ref{eq5}) implies $p(k|P_l,M)=p(k|P_{l'},M)$ for all $l,l'\in\{0,...,d-1\}$ and for all $k,M$. In other words different preparations $P_l$ corresponding to different partitions $\mathbb{P}_l$ are operationally equivalent. If we assume that an operational theory is preparation noncontextual then according to Eq.(\ref{eq4}), we have,
\begin{equation}\label{eq8}
p(\lambda|P_l)=p(\lambda|P_{l'}),~~\forall~~l,l'\in\{0,...,d-1\},
\end{equation}
or equivalently by using Eq.(\ref{eq6}) we can say, for all $l,l'$,
\begin{equation}\label{eq9}
\sum_{x\in\mathbb{P}_l}p(\lambda|P_x)=\sum_{x\in\mathbb{P}_{l'}}p(\lambda|P_x).
\end{equation}
Applying Bayes theorem in Eq.(\ref{eq9}) we have, for all $l,l'$, 
\begin{equation}\label{eq10}
\sum_{x\in\mathbb{P}_l}p(P_x|\lambda)=\sum_{x\in\mathbb{P}_{l'}}p(P_x|\lambda).
\end{equation}
Thus we can say that, for preparation noncontextual models, 	parity obliviousness at the operational level implies similar consequence at the level of the hidden variables, i.e, parity obliviousness should be satisfied at the hidden variable level too.

Hidden state $\lambda$ provides a classical encoding of $x$. 
But as just shown, for preparation noncontextual theories $\lambda$ cannot contain information about parity. Now the proof of this theorem follows from the result obtained in lemma\ref{lemma1} and theorem\ref{thm1}.  
\end{proof}

The theorem\ref{thm2} constitutes a class of preparation noncontextuality inequalities, i.e., if in some operational theory the success probability for d-PORAC game is more than the optimal classical success then the operational theory must be preparation contextual. As an interesting example we consider the $3$-PORAC game and show preparation contextuality for completely mixed state of a three dimensional quantum system (qutrit).

\emph{Quantum protocol for $3$-PORAC}: In $3$-PORAC task the three parity partitions of Alice's strings are $\mathbb{P}_0=\{00,12,21\}$, $\mathbb{P}_1=\{01,10,22\}$, and $\mathbb{P}_2=\{02,20,11\}$. For playing this game in quantum theory Alice encodes her string $x_1x_2$ into some quantum state $\rho_{x_1x_2}$ and send the state to Bob. The parity obliviousness requirement demands that,
\begin{equation}
\rho_{00}+\rho_{12}+\rho_{21}=\rho_{01}+\rho_{10}+\rho_{22}=\rho_{02}+\rho_{20}+\rho_{11},
\end{equation} 
If Alice encodes her strings into three orthogonal sets of states $\mathcal{A}_0=\{|\psi_{00}\rangle,|\psi_{12}\rangle,|\psi_{21}\rangle\}$, $\mathcal{A}_1=\{|\psi_{01}\rangle,|\psi_{10}\rangle,|\psi_{22}\rangle\}$, and $\mathcal{A}_2=\{|\psi_{02}\rangle,|\psi_{20}\rangle,|\psi_{11}\rangle\}$ then the above requirement is always fulfilled.

Here we consider a three dimensional pure state encoding. Denoting the computational basis of qutrit as $\{|0\rangle,|1\rangle,|2\rangle\}$, any vector $|\psi\rangle\in\mathbb{C}^3$ can be represented as $|\psi\rangle=\alpha|0\rangle+\beta|1\rangle+\gamma|2\rangle$. Alice encodes her strings as follows:
\begin{eqnarray*}
|\psi_{21}\rangle&=& |0\rangle,
|\psi_{12}\rangle= |1\rangle,
|\psi_{00}\rangle= |2\rangle;\\
|\psi_{01}\rangle &=& \frac{1}{3}(2|0\rangle+|1\rangle-2|2\rangle),\\
|\psi_{10}\rangle &=&\frac{1}{3}(|0\rangle+2|1\rangle+2|2\rangle),\\
|\psi_{22}\rangle &=& \frac{1}{3}(2|0\rangle-2|1\rangle+|2\rangle);\\
|\psi_{02}\rangle &=& \frac{1}{3}(\omega^2|0\rangle+2\omega|1\rangle+2|2\rangle),\\
|\psi_{20}\rangle &=&\frac{1}{3}(2\omega^2|0\rangle+\omega|1\rangle-2|2\rangle),\\
|\psi_{11}\rangle &=& \frac{1}{3}(2\omega^2|0\rangle-2\omega|1\rangle+|2\rangle);
\end{eqnarray*}
where $\omega$ is cube root of unity. These three sets of orthonormal vectors $\mathcal{A}_0$, $\mathcal{A}_1$, and $\mathcal{A}_2$ have the following property, each vector from any of the set has similar overlap with vectors from the remaining two sets. More precisely, for example, $|\psi_{21}\rangle$ from the set $\mathcal{A}_0$ has similar overlaps (in absolute value) with vectors from set $\mathcal{A}_1$ and the set $\mathcal{A}_2$. With set $\mathcal{A}_1$ the overlaps are, $2/3$ with $|\psi_{01}\rangle$, $|\psi_{22}\rangle$, and $1/3$ with $|\psi_{10}\rangle$, and with $\mathcal{A}_2$, the overlaps are $1/3$ with $|\psi_{02}\rangle$, and $2/3$ with $|\psi_{20}\rangle$ and $|\psi_{11}\rangle$. This feature has a resemblance to a set of mutually unbiased basis (MUB) \cite{Ivanovic'81,Wootters'89}, except that in a MUB all overlaps are equal, therefore, we call the set of bases a \emph{mutually asymmetric-biased basis} (MABB).

For decoding each of the alphabet, Bob performs a three outcome quantum measurement and guess the alphabet according to the measurement result. Given the above encoding, Bob performs  measurement $\sum_{i=0}^{2}|E_i\rangle\langle E_i|=\mathbf{I}_3$ to guess the first trit $x_1$, where
\begin{eqnarray*}
|E_0\rangle&=&\frac{1}{\sqrt{7}}\left(|\psi_{00}\rangle-|\psi_{01}\rangle+|\psi_{02}\rangle\right),\\
|E_1\rangle &=& \frac{1}{\sqrt{7}}\left(|\psi_{12}\rangle+|\psi_{10}\rangle+e^{\frac{\pi\mathbf{i}}{3}}|\psi_{11}\rangle\right),\\
|E_2\rangle &=&\frac{1}{\sqrt{7}}\left(|\psi_{21}\rangle+|\psi_{22}\rangle+e^{\frac{2\pi\mathbf{i}}{3}}|\psi_{20}\rangle\right);
\end{eqnarray*}
and for the second trit $x_2$ he performs measurement $\sum_{i=0}^{2}|F_j\rangle\langle F_j|=\mathbf{I}_3$, where
\begin{eqnarray*}
|F_0\rangle &=&\frac{1}{\sqrt{7}}\left(|\psi_{00}\rangle+|\psi_{10}\rangle-|\psi_{20}\rangle\right),\\
|F_1\rangle&=& \frac{1}{\sqrt{7}}\left(|\psi_{21}\rangle+|\psi_{01}\rangle+e^{\frac{2\pi\mathbf{i}}{3}}|\psi_{11}\rangle\right),\\
|F_2\rangle&=& \frac{1}{\sqrt{7}}\left(-|\psi_{12}\rangle+|\psi_{22}\rangle+e^{\frac{\pi\mathbf{i}}{3}}|\psi_{02}\rangle\right).
\end{eqnarray*}
For this quantum protocol, it turns out that, $|\langle E_i|\psi_{ij}\rangle|^2=|\langle F_j|\psi_{ij}\rangle|^2=7/9$ for $i,j=0,1,2$. Therefore, the average success probability $P=1/18\sum_{i,j=0,1,2}(|\langle E_i|\psi_{ij}\rangle|^2+|\langle F_j|\psi_{ij}\rangle|^2)=7/9$ which is strictly greater than the corresponding classical (noncontextual) bound, i.e., $1/2(1+1/3)=2/3$.

We also find quantum protocols for $d=4$ and $d=5$, which violate the corresponding noncontextual bounds [see the Appendix]. These results are obtained numerically by optimizing over all possible pure state encodings, respectively in $\mathbb{C}^4$ and $\mathbb{C}^5$, and all possible projective measurements for decoding. For $d=4$ and $d=5$, the obtained quantum protocol give average success probabilities taking values $0.7405$ and $0.7177$ respectively, which clearly beats the respective optimal classical (as well as noncontextual) bounds of $0.625$ and $0.6$. Note that, our exact quantum protocol for the case $d=3$ is also optimal over all possible pure state encodings in $\mathbb{C}^3$ and projective measurements on them. 

We also note that the quantum advantage increases with the dimension $d$. \cite{Spekkens'09}. For $d=2$, the ratio between the best quantum and classical success probabilities  \cite{Spekkens'09} is $1.138$.
We show that, for $d=3,4,5$, the ratios are $1.167$, $1.185$, $1.196$ respectively. 

\emph{Discussion and future directions}:--In contrast with our work, the quantum protocols for the $d=2$ case in \cite{Spekkens'09} are same as the $2\mapsto1$ and $3\mapsto1$ quantum random access code (QRAC) protocols \cite{Ambainis'99,Ambainis'02}. This fails for higher $d$: the $d$-level QRAC protocols in \cite{Tavakoli'15} for string length $2$ fail to satisfy the requirement of parity obliviousness condition (as defined in our information task) for $d=3$.
As a result, our encoding-decoding scheme is quite different from the quantum RAC protocol given in \cite{Tavakoli'15}.

Our results initiate several interesting questions. The information processing tasks defined in this work lead to noncontextuality inequalities for any finite values of $d$. We show quantum violation of these inequalities for some small values of $d$. 
At this point we conjecture that, the operational task defined in this work is sufficient to reveal preparation contextuality of maximally mixed states of any finite dimensional quantum system. 
For proving this conjecture, construction of generic quantum protocols for arbitrary values of $d$ is required. On the other hand, note that in \cite{Banik'15,Chailloux'16} the authors have found the optimal quantum violation of the noncontextuality inequality given in \cite{Spekkens'09}, and in particular in \cite{Banik'15} it has been shown that the violation of the said inequality in quantum theory is less than a hypothetical generalized probability namely \emph{box world}. Finding the optimal quantum violations of our contextuality inequalities is an open problem.

\emph{Acknowledgment}: MB acknowledges his visit at the University of Latvia and AR acknowledges his visit at the Institute of Mathematical Sciences, India, where this work has been done. MB and AR would like to thank G. Kar and S. Ghosh for fruitful discussions. AA, DK, and AR acknowledge support by the European Union Seventh Framework Programme (FP7/2007-2013) under the RAQUEL (Grant Agreement No. 323970) project, QALGO (Grant Agreement No. 600700) project, and the ERC Advanced Grant MQC. AC would like to acknowledge the grant, Harmonia 4 (Grant number:  UMO-2013/08/M/ST2/00626).

\begin{widetext}
\section{Appendix}
\subsection{Quantum protocol for $4$-PORAC game}
Here parity partitions of Alice's strings are $\mathbb{P}_0=\{00,13,31,22\}$, $\mathbb{P}_1=\{01,10,23,32\}$,  $\mathbb{P}_2=\{02,20,11,33\}$, and $\mathbb{P}_3=\{03,30,12,21\}$. Let Alice encodes her string $x_1x_2$ into some quantum state $\rho_{x_1x_2}$ and send the state to Bob. The parity obliviousness requirement demands that,
\begin{eqnarray}
\rho_{00}+\rho_{13}+\rho_{31}+\rho_{22}&=&\rho_{01}+\rho_{10}+\rho_{23}
+\rho_{32}\nonumber\\
&=&\rho_{02}+\rho_{20}+\rho_{11}+\rho_{33}\nonumber\\
&=&\rho_{03}+\rho_{30}+\rho_{12}+\rho_{21}.
\end{eqnarray}
If Alice encodes her strings into four orthonormal sets of states $\mathcal{A}_0=\{|\psi_{00}\rangle,|\psi_{13}\rangle,|\psi_{31},,|\psi_{22}\rangle\}$, $\mathcal{A}_1=\{|\psi_{01}\rangle,|\psi_{10}\rangle,|\psi_{23},,|\psi_{32}\rangle\}$, $\mathcal{A}_2=\{|\psi_{02}\rangle,|\psi_{20}\rangle,|\psi_{11},|\psi_{33}\rangle\}$ and $\mathcal{A}_3=\{|\psi_{03}\rangle,|\psi_{30}\rangle,|\psi_{12},|\psi_{21}\rangle\}$ then the above requirement is always fulfilled.

Here we consider four dimensional pure state encoding. Denoting the computational basis of $\mathbb{C}^4$ as $\{|0\rangle,|1\rangle,|2\rangle,|3\rangle\}$, any vector $|\psi\rangle=\alpha|0\rangle+\beta|1\rangle+\gamma|2\rangle+\delta|3\rangle$ can be represented as $|\psi\rangle\equiv[\alpha,\beta,\gamma,\delta]$. Alice's encoding is as follows:
\begin{eqnarray}
|\psi_{00}\rangle&=&[0,0,0,1],~|\psi_{31}\rangle=[0,0,1,0],~|\psi_{13}\rangle=[0,1,0,0],~|\psi_{22}\rangle=[1,0,0,0];
\end{eqnarray}
\begin{eqnarray}
|\psi_{01}\rangle&=&[-0.1345 + 0.0225i,  -0.2539 - 0.3035i,   0.5839 + 0.0576i,   0.6933],\nonumber\\
|\psi_{10}\rangle&=&[0.1283 - 0.0404i,   0.3662 + 0.4578i,  -0.3931 - 0.0344i,   0.6947],\nonumber\\
|\psi_{32}\rangle&=&[-0.6624 + 0.2077i,  -0.2564 - 0.3007i,  -0.5853 - 0.0330i,   0.1349],\nonumber\\
|\psi_{23}\rangle&=&[-0.6843 + 0.1143i,   0.3204 + 0.4909i,   0.3862 + 0.0849i,  -0.1366];
\end{eqnarray}
\begin{eqnarray}
|\psi_{20}\rangle&=&[-0.6194 + 0.2157i,   0.2488 + 0.2796i,   0.0007 - 0.0001i,  0.6556],\nonumber\\
|\psi_{02}\rangle&=&[-0.6191 + 0.2154i,   0.0004 + 0.0002i,  -0.3737 - 0.0291i,  -0.6556],\nonumber\\
|\psi_{11}\rangle&=&[-0.3285 + 0.1796i,  -0.5105 - 0.4114i,   0.6532 - 0.0575i,  -0.0010],\nonumber\\
|\psi_{33}\rangle&=&[0.0005 + 0.0000i,   0.4360 + 0.4899i,   0.6534 + 0.0510i,  -0.3747];
\end{eqnarray}
\begin{eqnarray}
|\psi_{30}\rangle&=&[0.3702 - 0.1379i,  -0.0719 - 0.1161i,   0.6935 + 0.0054i,  -0.5868],\nonumber\\
|\psi_{03}\rangle&=&[0.3780 - 0.1122i,  -0.4163 - 0.5556i,   0.1361 - 0.0021i,   0.5865],\nonumber\\
|\psi_{12}\rangle&=&[0.5494 - 0.2050i,   0.4360 + 0.5393i,   0.1361 - 0.0159i,   0.3954],\nonumber\\
|\psi_{21}\rangle&=&[-0.5627 + 0.1673i,   0.0751 + 0.1129i,   0.6935 + 0.0271i,   0.3941].
\end{eqnarray}
Each of the four sets $\mathcal{A}_0$, $\mathcal{A}_1$, $\mathcal{A}_2$, and $\mathcal{A}_3$ forms an orthogonal basis and hence satisfy the parity obliviousness condition. For decoding the first alphabet, Bob performs a four outcome measurement $\sum_{i=0}^{3}|E_i\rangle\langle E_i|=\mathbf{I}_4$ and guess the alphabet according to the measurement result, where 
\begin{eqnarray}
|E_0\rangle&=&[-0.2490 + 0.0899i,   0.1973 + 0.2519i,  -0.3188 - 0.0254i,  -0.8516],\nonumber\\
|E_1\rangle&=&[0.3019 - 0.1035i,   0.5355 + 0.6622i,  -0.2626 - 0.0386i,   0.3202],\nonumber\\
|E_2\rangle&=&[-0.8013 + 0.2896i,   0.2154 + 0.2354i,   0.3198 + 0.0008i,   0.2646],\nonumber\\
|E_4\rangle&=&[-0.3024 + 0.1035i,  -0.1890 - 0.1869i,  -0.8509 - 0.0298i,   0.3197];
\end{eqnarray}
and for second alphabet, Bob performs measurement $\sum_{i=0}^{3}|F_i\rangle\langle F_i|=\mathbf{I}_4$ where 
\begin{eqnarray}
|F_o\rangle&=&[0.2496 - 0.0892i,  -0.2004 - 0.2484i,   0.3187 + 0.0114i,  -0.8522],\nonumber\\
|F_1\rangle&=&[-0.3082 + 0.0849i,  -0.1800 - 0.1951i,   0.8493 + 0.0679i,   0.3186],\nonumber\\
|F_2\rangle&=&[ -0.8022 + 0.2868i,  -0.2041 - 0.2454i,  -0.3195 - 0.0067i,  -0.2650],\nonumber\\
|F_3\rangle&=&[0.3076 - 0.0847i,  -0.5244 - 0.6714i,  -0.2618 - 0.0427i,   0.3195].
\end{eqnarray}
For this above encoding-decoding we find that $P=\frac{1}{32}\sum_{i,j}Tr[\rho_{i,j}E_i+F_j]=0.7405$ while the optimal classical average success probability is $1/2(1+1/4)=0.625$.
\subsection{Quantum protocol for $5$-PORAC game}
Here parity partitions are $\mathbb{P}_0=\{00,14,41,23,32\}$, $\mathbb{P}_1=\{01,10,24,42,33\}$,  $\mathbb{P}_2=\{02,20,11,34,43\}$, $\mathbb{P}_3=\{03,30,12,21,44\}$, and $\mathbb{P}_4=\{04,40,13,31,22\}$.  The parity obliviousness conditions read
\begin{eqnarray}
\rho_{00}+\rho_{14}+\rho_{41}+\rho_{23}+\rho_{32}
&=&\rho_{01}+\rho_{10}+\rho_{24}+\rho_{42}+\rho_{33}\nonumber\\
&=&\rho_{02}+\rho_{20}+\rho_{11}+\rho_{34}+\rho_{43}\nonumber\\
&=&\rho_{03}+\rho_{30}+\rho_{12}+\rho_{21}+\rho_{44}\nonumber\\
&=&\rho_{04}+\rho_{40}+\rho_{13}+\rho_{31}+\rho_{22}.
\end{eqnarray}
Similarly to previous cases, Alice encodes her strings into five orthonormal sets of states $\mathcal{A}_0=\{|\psi_{00}\rangle,|\psi_{14}\rangle,|\psi_{41},|\psi_{23}\rangle,|\psi_{32}\rangle\}$, $\mathcal{A}_1=\{|\psi_{01}\rangle,|\psi_{10}\rangle,|\psi_{24},|\psi_{42}\rangle,|\psi_{33}\rangle\}$, $\mathcal{A}_2=\{|\psi_{02}\rangle,|\psi_{20}\rangle,|\psi_{11},|\psi_{34}\rangle,|\psi_{43}\rangle\}$,  $\mathcal{A}_3=\{|\psi_{03}\rangle,|\psi_{30}\rangle,|\psi_{12},|\psi_{21},|\psi_{44}\rangle\rangle\}$ and $\mathcal{A}_4=\{|\psi_{04}\rangle,|\psi_{40}\rangle,|\psi_{13},|\psi_{31},|\psi_{22}\rangle\rangle\}$, that in computational basis of $\mathbb{C}^5$ read,
\begin{eqnarray}
|\psi_{00}\rangle=[0,0,0,0,1],~|\psi_{41}\rangle=[0,0,0,1,0],~|\psi_{14}\rangle=[0,0,1,0,0],~|\psi_{32}\rangle=[0,1,0,0,0],~|\psi_{23}\rangle=[1,0,0,0,0];
\end{eqnarray}
\begin{eqnarray}
|\psi_{10}\rangle&=&[0.23497 - 0.07340i ,  0.16411 - 0.10914i ,  0.42583 + 0.50168i , -0.19303 + 0.15607i , -0.63712],\nonumber\\
|\psi_{01}\rangle&=&[-0.18462 + 0.06593i , -0.20731 + 0.12870i , -0.16285 - 0.18055i ,  0.51463 - 0.38890i , -0.65332],\nonumber\\
|\psi_{42}\rangle&=&[-0.24112 + 0.08538i , -0.56454 + 0.30980i , -0.12652 - 0.15098i , -0.52601 + 0.37747i , -0.24885],\nonumber\\
|\psi_{24}\rangle&=&[-0.60757 + 0.21215i , -0.21817 + 0.12492i ,  0.41531 + 0.49038i ,  0.16960 - 0.12466i ,  0.25570],\nonumber\\
|\psi_{33}\rangle&=&[0.62265 - 0.18358i , -0.57837 + 0.29870i  , 0.13628 + 0.19373i  , 0.19499 - 0.14425i ,  0.19985];
\end{eqnarray}
\begin{eqnarray}
|\psi_{20}\rangle&=&[-0.37409 + 0.52632i , -0.24065 - 0.01880i ,  0.09926 - 0.17359i , -0.07590 - 0.25674i ,  0.64274],\nonumber\\
|\psi_{02}\rangle&=&[ 0.13977 - 0.20058i ,  0.64176 + 0.05606i , -0.12928 + 0.21553i ,  0.06055 + 0.19160i ,  0.64938],\nonumber\\
|\psi_{42}\rangle&=&[-0.37619 + 0.53274i ,  0.18936 + 0.01847i , -0.12115 + 0.20752i ,  0.19401 + 0.62175i , -0.23774],\nonumber\\
|\psi_{24}\rangle&=&[0.11346 - 0.15816i , -0.65018 - 0.05766i , -0.33344 + 0.55217i  , 0.07480 + 0.22713i ,  0.25061],\nonumber\\
|\psi_{33}\rangle&=&[0.14855 - 0.19490i , -0.25265 - 0.02560i ,  0.34980 - 0.54834i  , 0.17203 + 0.61397i ,  0.21416];
\end{eqnarray}
\begin{eqnarray}
|\psi_{30}\rangle&=&[ -0.00534 - 0.24987i ,  0.47253 + 0.43074i ,  0.24297 + 0.07217i , -0.20191 + 0.01868i , -0.65066],\nonumber\\
|\psi_{03}\rangle&=&[ 0.02569 + 0.64638i  ,-0.18296 - 0.16861i , -0.19043 - 0.04582i ,  0.25060 + 0.00014i , -0.64689],\nonumber\\
|\psi_{12}\rangle&=&[-0.01010 - 0.19929i , -0.49199 - 0.42851i ,  0.63332 + 0.13808i , -0.24023 + 0.00154i , -0.23796],\nonumber\\
|\psi_{21}\rangle&=&[ 0.04174 + 0.64483i ,  0.15569 + 0.12225i ,  0.24257 + 0.04356i , -0.65035 + 0.01444i ,  0.24365],\nonumber\\
|\psi_{44}\rangle&=&[0.00276 + 0.24838i ,  0.16596 + 0.19200i ,  0.61595 + 0.19261i ,  0.64285 + 0.04428i  , 0.20539];
\end{eqnarray}
\begin{eqnarray}
|\psi_{40}\rangle&=&[0.12419 + 0.15209i , -0.04125 - 0.23659i , -0.02914 - 0.24743i ,  0.20620 - 0.62431i ,  0.63986],\nonumber\\
|\psi_{04}\rangle&=&[-0.15096 - 0.18341i ,  0.03763 + 0.20009i ,  0.06565 + 0.64204i , -0.07449 + 0.23344i ,  0.65234],\nonumber\\
|\psi_{31}\rangle&=&[-0.39083 - 0.51585i ,  0.05268 + 0.25138i , -0.05547 - 0.64174i , -0.06555 + 0.18659i  , 0.24731],\nonumber\\
|\psi_{13}\rangle&=&[-0.14877 - 0.19642i ,  0.14476 + 0.63049i ,  0.01710 + 0.20906i  , 0.19415 - 0.61023i , -0.25833],\nonumber\\
|\psi_{22}\rangle&=&[-0.40081 - 0.51459i , -0.13601 - 0.63082i ,  0.03136 + 0.24802i  , 0.08812 - 0.22522i , -0.19270].
\end{eqnarray}
Bob's first decoding measurement is $\sum_{i=0}^{4}|E_i\rangle\langle E_i|=\mathbf{I}_5$ where,
\begin{eqnarray}
|E_0\rangle&=&[-0.03309 + 0.25670i , -0.25437 - 0.07003i , -0.06867 - 0.25502i ,  0.18550 - 0.19008i , -0.85036],\nonumber\\
|E_1\rangle&=&[0.25000 + 0.08268i  , 0.03584 - 0.27017i  , 0.42688 + 0.73497i , -0.24068 - 0.09143i,  -0.26020],\nonumber\\
|E_2\rangle&=&[ 0.36296 - 0.76531i ,  0.17253 - 0.20234i,  -0.26009 + 0.03026i ,  0.21365 + 0.17299i , -0.26023],\nonumber\\
|E_3\rangle&=&[-0.22262 - 0.15015i  , 0.71573 + 0.45450i ,  0.25044 - 0.09303i , -0.15832 - 0.19830i,  -0.27073],\nonumber\\
|E_4\rangle&=&[-0.22262 - 0.15015i  , 0.71573 + 0.45450i ,  0.25044 - 0.09303i , -0.15832 - 0.19830i , -0.27073];
\end{eqnarray}
and the second decoding measurement is $\sum_{i=0}^{4}|F_i\rangle\langle F_i|=\mathbf{I}_5$ where,
\begin{eqnarray}
|F_0\rangle&=&[ -0.11375 + 0.23729i , -0.21967 - 0.13607i , -0.14148 - 0.23461i ,  0.12307 - 0.24902i ,  0.84366],\nonumber\\
|F_1\rangle&=&[0.22685 + 0.13877i  , 0.09381 - 0.24781i ,  0.26314 - 0.04382i , -0.57053 + 0.62201i ,  0.27479],\nonumber\\
|F_2\rangle&=&[ 0.26939 - 0.00519i ,  0.81916 + 0.21292i , -0.24642 + 0.11038i ,  0.24692 + 0.08601i,   0.26415],\nonumber\\
|F_3\rangle&=&[ 0.11939 - 0.84029i , -0.03406 + 0.26824i ,  0.19925 - 0.16385i , -0.21871 - 0.15566i ,  0.26064],\nonumber\\
|F_4\rangle&=&[-0.25062 - 0.06551i , -0.17506 + 0.20714i ,  0.21985 + 0.81608i , 0.16329 + 0.20819i ,  0.27390].
\end{eqnarray}
For this above encoding-decoding it becomes that $P=\frac{1}{50}\sum_{i,j}Tr[\rho_{i,j}E_i+F_j]==0.71773$ while the optimal classical average success probability is $1/2(1+1/5)=0.6$.
\end{widetext}

\end{document}